\title{Learning Exactly Linearizable Deep Dynamics Models}
\author{Ryuta Moriyasu\thanks{Green Fuel Research-Domain, Toyota Central R\&D Labs., Inc., 41-1 Yokomichi, Nagakute, Aichi 480-1192, JAPAN} \and Masayuki Kusunoki\thanks{Engineering Dept., Engine Division, Toyota Industries Corporation, 3 Hama-cho, Hekinan-Shi, Aichi 447-8507, JAPAN} \and Kenji Kashima\thanks{Graduate School of Informatics, Kyoto University, Yoshida-honmachi, Sakyo-ku, Kyoto 606-8501, JAPAN}{ }\thanks{Corresponding author (e-mail: kk@i.kyoto-u.ac.jp, tel: (+81) 75-753-5512)}}
\date{}
\newtheorem{dfn}{Definition}
\newtheorem{thm}[dfn]{Theorem}
\newcommand{\del}{\partial}
\newcommand{\diff}[2]{\frac{\mathrm{d} #1}{\mathrm{d} #2}}
\newcommand{\pdiff}[2]{\frac{\del #1}{\del #2}}
\newcommand{\PAR}[1]{{\left( #1 \right)}}
\newcommand{\R}{\mathbb{R}}
\newcommand{\T}{\mathrm{T}} %ローマンT
\newcommand{\argmin}{\mathop{\mathrm{arg~min}}\limits}
\newcommand{\minimize}{\mathop{\mathrm{minimize}}\limits}
\newcommand{\mat}[1]{\begin{bmatrix}#1\end{bmatrix}}
\newenvironment{keywords}
{\bgroup\leftskip 20pt\rightskip 20pt \small\noindent{\bfseries
Keywords:} \ignorespaces}%
{\par\egroup\vskip 0.25ex}
\newcommand{\figcaption}[1]{\def\@captype{figure}\caption{#1}}
\newcommand{\tblcaption}[1]{\def\@captype{table}\caption{#1}}
\begin{document}

	\maketitle

	\begin{abstract}%
	Research on control using models based on machine-learning methods has now shifted to the practical engineering stage. Achieving high performance and theoretically guaranteeing the safety of the system is critical for such applications. In this paper, we propose a learning method for \emph{exactly linearizable} dynamical models that can easily apply various control theories to ensure stability, reliability, etc., and to provide a high degree of freedom of expression. As an example, we present a design that combines simple linear control and control barrier functions. The proposed model is employed for the real-time control of an automotive engine, and the results demonstrate good predictive performance and stable control under constraints.
	
	\end{abstract}
	
	\begin{keywords}%
		model predictive control, machine learning,	control application, nonlinear control,	Hammerstein-Wiener model		
	\end{keywords}
	
	\section{Introduction}\label{section: introduction}
	
	In recent years, there has been a growing interest in using machine learning (particularly deep learning) for modeling dynamical systems \citep{kocijian01, Hedjar2013, Lenz2015, Moriyasu2019}.
	Unlike the traditional modeling approach, machine learning can be used to create highly accurate models more easily without the need for detailed domain knowledge. However, control design using such models can become challenging because of the strong nonlinearity of the machine-learning models.
	Model predictive control (MPC) is a typical control method that can handle nonlinearity, wherein the model is used to optimize the predicted future behavior of the system; however, the optimal control problem solved each time is often nonconvex \citep{awrynczuk2008, nghiem01, Gros2019} when the model is highly nonlinear, thereby making it difficult to ensure the uniqueness and optimality of the numerically obtained solution and the continuity of the control law \citep{Moriyasu2022}.
	Various nonlinear control theories besides MPC cannot be easily applied to general machine learning models because the model structures to which the theories can be applied are often limited to specific systems, such as input-affine systems.
	
	Machine learning models with special structures for controlling design methods have been proposed to overcome the abovementioned problem. For example, a dynamical model structure called the input convex recurrent neural network (ICRNN) \citep{Chen2019a} was proposed to guarantee the convexity of the optimal control problem in economic MPC, i.e., MPC that directly aims to minimize or maximize the outputs or states. This is an extension of the idea of an input convex neural network (ICNN) \citep{Amos2017}, which is a deep neural network that can guarantee the convexity of the input--output relationships, to dynamical systems. Further, a structured Hammerstein--Wiener (S--HW) model \citep{Moriyasu2022} can ensure the convexity of the usual MPC problems for reference tracking or regulation for which convexity cannot be guaranteed with ICRNN. This is an application of the ICNN and bijective neural network (BNN) \citep{Baird2005} to the Hammerstein--Wiener (HW) model \citep{hammerstein, wiener}, which has long been known in the field of system identification. The HW model has a linear dynamical system sandwiched between static nonlinear bijective mappings. 
	Static nonlinearity can be canceled out by utilizing bijectivity, and the control problems can be reduced to linear control problems \citep{Fruzzetti1997, CERVANTES2003}.
	In the S--HW model, BNNs are used to express bijective mappings and an ICNN is attached to learn the additional outputs to be constrained. Therefore, the constrained nonlinear tracking control problems can be reduced to convex problems. 
	
	However, such a specific structured model can lose its expressive ability when the above properties are ensured by limiting the range of weights, choice of activation function, and connecting topology of the network. 
	Therefore, it is necessary to propose a model structure that can learn a wider range of objects with high accuracy and is easy to use in control design.
	
	Given this context, we focused on a class of nonlinear systems called \emph{exactly linearizable} (EL) systems and proposed a learning model that is guaranteed to have this property. Although the HW model represents the input/output of a nonlinear system by transforming the input/output of a linear dynamical system using static nonlinear functions, EL systems can be attributed to linear dynamics using nonlinear dynamic feedback \citep{Khalil2002}.
	This feedback is an extension of static mapping, and therefore, an EL system can be considered an extension of the HW system. 
	The EL models lose the ability to make the constrained MPC convex because of the nonlinear feedback structure. However, because they are linearizable, linear control methods, such as linear quadratic regulators (LQR), and various nonlinear control methods can be easily applied with control barrier functions (CBF), which enables the constraint-aware control design.
	
	The contribution of this study is that it proposes an approach to learn EL models, which is a class of systems that have a higher degree of freedom of expression than conventional methods, while facilitating control design. 
	Verifying that a dynamical system is EL and constructing a dynamic feedback law that achieves linearization is usually based on heuristics and is therefore impractical. Our idea of introducing output feedback in a static nonlinear mapping of the S--HW model structure can guarantee that the learning model has EL properties and can explicitly obtain a dynamic feedback law.
	
	The remainder of this paper is organized as follows: Section 2 shows the structure of the proposed model in comparison with the S--HW model.
	Section 3 presents a control design method using the proposed model. Further, we present the simplest method based on LQR. 
	% Other nonlinear control methods are also applicable but are omitted owing to space constraints. 
	% The full version, including the content as an appendix, is available in arXiv \citep{Moriyasu2023a}.
	Further, this section introduces integral CBFs \citep{Ames2021} which enable the model to be more flexible in terms of its degrees of freedom.
	In Section 4, the model is applied to the air path and combustion process of an engine, which is a complex nonlinear system, to verify its advantages in terms of accuracy and show that control objectives can be successfully achieved via a control simulation.
	
	\section{Exactly Linearizable Model}\label{section: modeling}
	
	\subsection{Problem Statement}\label{problem}
	
	This study addresses the modeling of nonlinear MIMO systems and the design of control based on the obtained models considering the upper and lower input constraints and the upper output constraint. Let $u\in\R^{m}$, $d\in\R^{l}$, $x\in\R^{n}$, and $y\in\R^{p}$ represent the control input, exogenous input (which can include measurable disturbance), state, and output, respectively.
	%We assume that input $v$ and output $y$ have the same dimensions, that is,  $m = n$.
	We assume that the state $x$, exogenous input $d$, and their time-derivatives $\dot{x}, \dot{d}$ (numerical differentiation is acceptable) can be observed online and $d$ cannot be adjusted.
	Observation noise is not considered explicitly in this paper.

	We consider building a model that can predict the response $x,y$ to $u,d$ by deep learning. 
	In control design, the controller is designed to maintain $x$ close to the target $x_\mathrm{d}\in\R^{n}$ considering the upper and lower constraints for each element of the control input $u=[u_1,\ldots,u_{m}]^\T$
	\begin{align}
		\underline{u}_i  \leq u_i \leq \overline{u}_i \ (i=1,\ldots,m)\label{cons_v}
	\end{align}
	and the upper constraint for each element on the output $y=[y_1,\ldots,y_{p}]^\T$
	\begin{align}
		y_j \leq \overline{y}_j \ (j=1,\ldots,p)\label{cons_z}.
	\end{align}
	
	\subsection{Exact Linearization}\label{subsec:exact_linearization}
	
	Exact linearization is entirely different from standard linearization based on the Taylor expansion at a specific operating point. 
	An input affine nonlinear system is described by 
	\begin{align}
		\dot{x} = \overline{f}(x) + \overline{g}(x)u \label{eq: exact linearizable equation}
	\end{align}
	with $x \in \R^{n}, u \in \R^{m}$.
	In the exact linearization method, system \eqref{eq: exact linearizable equation} is transformed by using coordinate transformation and nonlinear feedback, which are denoted by
	\begin{align}
		\xi &= \Phi(x) \label{eq: coordinate change} \\
		u &= \Psi_1(x) + \Psi_2 (x)v \label{eq: exact linearization feedback}
	\end{align}
	with a bijective map $\Phi$, nonlinear functions $\Psi_1,\Psi_2$, and newly introduced input $v\in \R^m$. This yields
	\begin{align}
		\dot{\xi} &=
		\frac{\partial \Phi}{\partial x}\{\overline{f}(\Phi^{-1}(\xi)) + \overline{g}(\Phi^{-1}(\xi)) \Psi_1(\Phi^{-1}(\xi))\}+ \frac{\partial \Phi}{\partial x}\overline{g}(\Phi^{-1}(\xi))\Psi_2(\Phi^{-1}(\xi))v \nonumber \\
		&=: f(\xi) + g(\xi)v.
	\end{align}
	Then, if we can find the coordinate transformation \eqref{eq: coordinate change} and feedback \eqref{eq: exact linearization feedback} such that
	\begin{align}
		f(\xi) = A\xi, \ g(\xi) = B
	\end{align}
	and $(A, B)$ are controllable, the system is considered \emph{exactly linearizable}. 
	
	Assuming that the plant and feedback are input-affine, as in \eqref{eq: exact linearizable equation} and \eqref{eq: exact linearization feedback}, such a linearizing transformation and feedback exist under some theoretically sufficient conditions; see Appendix~1. 
	However, this input-affine structure is sometimes restrictive for the dynamics of complex real systems, and therefore, we assumed the following target system, coordinate transformation, and feedback:
	\begin{align}
		\dot{x} = \overline{f}(x,u),\ \xi = \Phi(x), \ u = \Psi^{-1}(v, x)
	\end{align}
	for which the dynamics of $\xi$ are affine such that
	\begin{align}\label{eq:x_affine}
		\dot{\xi} =& A\xi + Bv + c. 
	\end{align}
	This motivated us to consider a continuous-time system represented by the following dynamics.
	
	\begin{dfn}\label{dfn:exatly_linearizable}
		Suppose $\Psi: \R^{m}\times \R^{n}\times \R^{l}\rightarrow \R^{m}$, $\Phi: \R^{n}\times \R^{l}\rightarrow \R^{n}$ and $\Xi : \R^{n}\times \R^{m}\times \R^{l}\rightarrow \R^{p}$ satisfy the following conditions:
		$\Psi( \cdot, x, d ),\ \Phi( \cdot, d)$  are bijective.
		$\Xi( \cdot, \cdot, d) $ is convex for any $x, d$,
		then, the system
		\begin{align}
			v(t) &= \Psi(u(t),x(t), d(t)),  \label{eq: v to u}\\
			\dot{\xi}(t) &= A(d(t)) \xi(t) + B(d(t)) v(t) + c(d(t)), \label{eq: dotx on d}\\
			x(t) &= \Phi^{-1}(\xi(t),d(t)), \label{eq: x to y}\\
			y(t) &= \Xi (\xi(t),v(t),d(t)) \label{eq: x,u to z}
		\end{align}
		is denoted as an exactly linearizable model, where $t\in\R$ and $v\in\R^{m}$ and $\xi\in\R^{n}$ represent the continuous time and the internal input and state of the affine dynamics, respectively. 
	\end{dfn}
	Note that the drift term $c$ in (\ref{eq:x_affine}) and (\ref{eq: dotx on d}) is introduced to extend the degree of freedom of the model and is removable if it is unnecessary.
	
	\subsection{Exactly Linearizable Model}
	
	In this section, we propose a neural network based model structure that corresponds to the Definition~\ref{dfn:exatly_linearizable}, which we call the EL model.
	A block diagram of the EL model is shown in Fig.~\ref{fig: Exactly linearizable model}. 
	The red line in Fig.~\ref{fig: Exactly linearizable model} shows its difference from the S--HW model \citep{Moriyasu2022}.
	The S--HW model refers to the model in which \eqref{eq: v to u} is restricted to the output-independent $v(t) = \Psi(u(t), d(t))$. 
	This simple extension in the EL model allows the model to learn a wider range of dynamical systems than the S--HW model. 
	For example, the S--HW model always has a unique equilibrium corresponding to each input, whereas the EL model can represent systems with multiple equilibria. This difference greatly improves the accuracy of the model but requires more careful treatment during control design. 
	\begin{figure}[t]
		\centering
		\includegraphics[width=.6\linewidth]{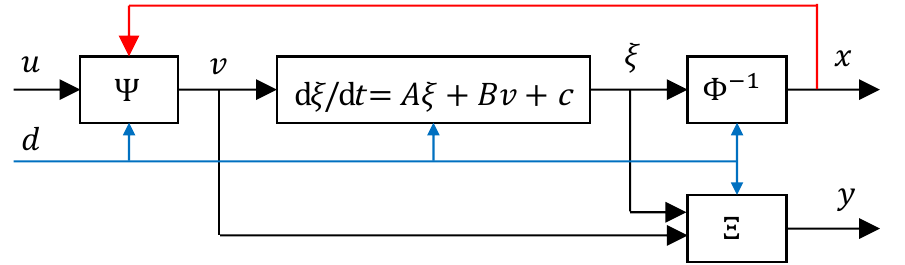}
		\caption{Exactly linearizable model}
		\label{fig: Exactly linearizable model}
	\end{figure}	
	\begin{figure}[t]
		\centering
		\includegraphics[width=1.0\linewidth]{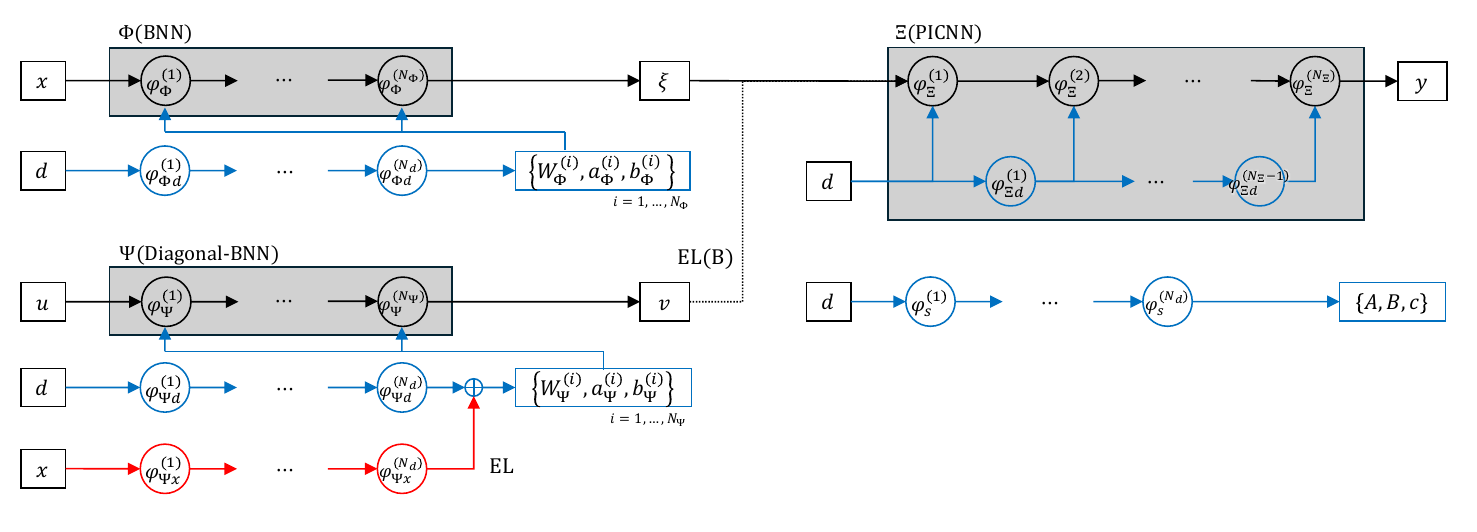}
		\caption{Network structure of exactly linearizable model}
		\label{fig: Network}
	\end{figure}

	% We attempt to construct EL models introduced in Definition \ref{dfn:exatly_linearizable} from an input--output data sequence.
	The whole network structure of EL models is illustrated in Fig.~\ref{fig: Network}.
	Recall that functions $\Phi(\cdot,d)$ and $\Psi(\cdot,x,d)$ must be bijective mappings. 
	We parameterize each mapping using a bijective neural network (BNN) \citep{Baird2005} and a diagonal-BNN, respectively.
	$\Phi$ is represented as the $N_\Phi$-layered BNN
	\begin{align}
		\Phi(x,d) = \varphi^{(N_\Phi)}_\Phi \circ \varphi^{(N_\Phi-1)}_\Phi \circ \cdots \circ \varphi^{(1)}_\Phi(x,d),
	\end{align}
	where $\varphi^{(i)}_\Phi(\cdot,d) (i=1,\ldots,N_\Phi)$ are parametrized bijective mappings. 
	In this study, we employ 
	\begin{align}
		\varphi^{(i)}_\Phi(x,d) = \sinh^{-1}(a^{(i)}_\Phi(d)+\sinh(W^{(i)}_\Phi(d) x+b^{(i)}_\Phi(d)))\ (i=1,\ldots,N_\Phi),
	\end{align}
	where $W^{(i)}_\Phi(d)$ and $a^{(i)}_\Phi(d),b^{(i)}_\Phi(d)$ represents matrix-valued and vector-valued functions to be learned, described later.
	Similarly, $\Psi$ is represented as the $N_\Psi$-layered diagonal BNN 
	\begin{align}
		\Psi(u,x,d) &= \varphi^{(N_\Psi)}_\Psi \circ \varphi^{(N_\Psi-1)}_\Psi \circ \cdots \circ \varphi^{(1)}_\Psi(u,x,d),\\
		\varphi^{(i)}_\Psi(u,x,d) &= \sinh^{-1}(a^{(i)}_\Psi(x,d)+\sinh(W^{(i)}_\Psi(x,d) u + b^{(i)}_\Psi(x,d))) \ (i=1,\ldots,N_\Psi).
	\end{align}
	Unlike $\Phi$, functions $W^{(i)}_\Psi \ (n=1,\cdots,N_\Psi)$ have to be diagonal so that the input--output relationship of the BNN becomes element-wise.
	Moreover, the function $\Xi(\cdot,\cdot,d)$ must be convex, and we adopt a partially input convex neural network (PICNN) \citep{Amos2017} for parametrization.
	PICNN is represented as the $N_\Xi$-layered neural network:
	\begin{align}
		\Xi(\xi,v,d)  &= \zeta^{(N_\Xi)}, \zeta^{(0)} = [\xi^\T, v^\T]^\T, \eta^{(0)} = d, \\
		\zeta^{(i)} &= \varphi_{\Xi}^{(i)} \left(
			   W_\zeta^{(i)} \PAR{ \zeta^{(i-1)} \odot {\rm softplus}(W_{\zeta\eta}^{(i)}\eta^{(i-1)}+b_{\zeta\eta}^{(i)}) } \right.\nonumber\\
			 		& \ \ \ \ \ \ \ \ \  \left. + \ W_0^{(i)} \PAR{\zeta^{(0)} \odot (W_{0\eta}^{(i)} \eta^{(i-1)} +b_{0\eta}^{(i)}) } 
			 		  + W_{b\eta}^{(i)} \eta^{(i-1)} + b_{b\eta}^{(i)} \right) ,\\
		\eta^{(i)} &= \varphi_{\Xi d}^{(i)} \PAR{ W_\eta^{(i)} \eta^{(i-1)} + b_\eta^{(i)} }, (i=1,\ldots,N_\Xi),
	\end{align}
	where $W_{\zeta}^{(i)},W_0^{(i)},W_\eta^{(i)},W_{\zeta\eta}^{(i)},W_{0\eta}^{(i)},W_{b\eta}^{(i)}\ (i=1,\ldots,N_\Xi)$ and $b_\eta^{(i)},b_{\zeta\eta}^{(i)},b_{0\eta}^{(i)},b_{b\eta}^{(i)}\ (i=1,\ldots,N_\Xi)$ are weights and biases to be learned, $\varphi_\Xi^{(i)},\varphi_{\Xi d}^{(i)}\ (i=1,\ldots,N_\Xi)$ are activation functions, and ${\rm softplus}(x):= \log(1+e^x)$. 
	$\varphi_\Xi^{(i)}, W_\zeta^{(i)} \ (i=1,\ldots,N_\Xi)$ have to be monotonically non-decreasing convex functions and nonnegative matrices, respectively, to ensure convexity of $\Xi$ with respect to $\xi, v$.

	In this model, $d$-dependent functions $A, B, c, W^{(i)}_\Phi, a^{(i)}_\Phi, b^{(i)}_\Phi, W^{(j)}_\Psi, a^{(j)}_\Psi, b^{(j)}_\Psi \ (i=1,\ldots,N_\Phi, \ j=1,\ldots,N_\Psi)$ can be arbitrary functions; however, in this study, a standard fully connected neural network (FNN) is used to represent each element.
	Precisely, $W^{(j)}_\Psi, a^{(j)}_\Psi, b^{(j)}_\Psi \ (j=1,\ldots,N_\Psi)$ also have $x$-dependency in EL model.
	We deal with this $(x,d)$-dependency by employing $x$-dependent FNN and add its output with that of $d$-dependent FNN to obtain the above functions (shown as red in Fig.~\ref{fig: Network}).
	% This allows us to easily switch the model structure of S--HW and EL by just adding or removing the $x$-dependent NN. 
	The example of detailed settings is show in Section~4.
	
	A summary of the superior properties of the EL model is presented below. First, the dynamical system can be considered linear because of the dynamic feedback. Second, the reference for $x$ can be replaced with that for $\xi$ because of the bijectivity of $\Phi$. 
	Third, the admissible set of $v$ corresponding to the input constraint \eqref{cons_v} is convex for any $\xi$ owing to the element-wise bijectivity of $\Psi$.
	Finally, the admissible set of $\xi,v$ corresponding to the output constraint \eqref{cons_z} is convex because of the convexity of $\Xi$.
	
	The learning process of the EL model is as follows: 
	The EL model structure is similar to that of the S--HW model \citep{Moriyasu2022}, except that output $x$ is fed back into \eqref{eq: v to u}, to use the same procedure for learning the model. The outputs predicted by the model are expressed with a hat to distinguish it from real data. From \eqref{eq: v to u}--\eqref{eq: x,u to z}, we obtain
	\begin{align}
		\dot{\hat{x}} &= \PAR{\pdiff{\Phi}{x}}^{-1} \PAR{A(d) \Phi(x,d) + B(d) \Psi(u,x,d) + c(d) - \pdiff{\Phi}{d} \dot{d}},\\  \label{eq: y hat dot}
		\hat{y} &= \Xi (\Phi(x,d), \Psi(u,x,d), d).
		% \label{eq: HW Xi}. 
	\end{align}
	A key point is that the bijectivity of the BNN enables these evaluations without using internal signals $\xi, v$. 
	The data required to evaluate these are those of $u,x,d,\dot{d}$.
	The prediction error can be evaluated as $e:=[(\dot{\hat{x}}-\dot{x})^\T, (\hat{y}-y)^\T]^\T$ using the answer data of $\dot{x},y$.
	Thus, model learning is reduced to an error minimization problem 
	\begin{align}
		\minimize_\theta \frac{1}{N_s} \sum_{i=1}^{N_s} e_i^\T Q_e e_i,
	\end{align}
	where $N_s$, $e_i$, $Q_e$, and $\theta$ represent the number of sampled data points, prediction error for the $i$~th data point, positive definite matrix, and model parameter vector, respectively.
	$\theta$ includes all free parameters, i.e., weight matrices ans bias vectors, of the network (Fig.~\ref{fig: Network}) and is not listed here due to the sheer number of elements.
	The above problem can be solved typically by stochastic gradient descent (SGD) with a machine learning framework such as Tensorflow.
	Using such tools, Jacobians $\del \Phi/\del x, \del \Phi/\del d$ and also their derivatives, which are required in gradient based learning algorithms, can be caluculated automatically.
	
	Note that, (\ref{eq: y hat dot}) is used only in learning phase. 
	In control computation, we can use (\ref{eq: dotx on d}) and (\ref{eq: x to y}) to evaluate output $y$.
	That allows us to avoid heavy computation for evaluating Jacobian and achieve fast computation of model-based controller.
	In addition, data, especially of time derivatives ($\dot{x},\dot{d}$), are often noisy in practice, but in deep learning, a certain amount of noise may be rather beneficial to improve generalization performance \citep{liu2020does}. 
	If simulators or very accurate experimental instruments are used to generate data, we recommend adding some noise to avoid overfitting.

	\section{Control Design}\label{section: control design}
	
	In this section, we describe a control method that satisfies constraints \eqref{cons_v} and \eqref{cons_z} and regulates the output $x$ to the reference value $x_\mathrm{d}$.
	This constrained regulation control can be designed as a convex MPC using the S--HW model as shown by \cite{Moriyasu2022}; however, the introduction of the output feedback in \eqref{eq: v to u} does not ensure the convexity of the input admissible set in the finite-horizon optimal control problem.
	In this paper, we propose an alternative design method for an EL model that avoids this problem and attributes the control law to convex optimization instead.
	
	\subsection{Regulation Control}
	\label{subsec:LQR}
	
	From the bijectivity of $\Phi$ and $\Psi$ in Definition \ref{dfn:exatly_linearizable}, the regulation control problem for $x$ (using input $u$) can be reduced to a linear control problem for $\xi$ (using input $v$) for which the optimal design methods are well developed.
	For the simplicity of discussion, exogenous input $d$ is assumed to be constant in the transformed linear dynamics in \eqref{eq: v to u}–\eqref{eq: x to y}, such that
	\begin{align*}
		\dot{\xi} &= A (\bar{d})\xi + B(\bar{d})v + c(\bar{d}) ,\\
		\xi &= \Phi(x, \bar{d}), \ 
		v = \Psi(u, x, \bar{d}).
	\end{align*}
	Suppose that $x_\mathrm{d}$ is realizable as a steady state with $d = \bar{d}$ and there exists $v_\mathrm{d}$ such that 
	\begin{align}\label{eq: target value on LQR}
		A(\bar{d}) \xi_\mathrm{d} + B(\bar{d}) v_\mathrm{d} + c(\bar{d}) =0,\ 
		\xi_\mathrm{d} := \Phi( x_\mathrm{d}, \bar{d}).
	\end{align}
	We consider the control input $v(t)$ for the linear part that minimizes the cost function 
	\begin{align}
		J := \int_{0}^{\infty} \tilde{\xi}^\T Q\tilde{\xi} + \tilde{v}^\T R \tilde{v} \ \mathrm{d}t 
	\end{align}
	where $\tilde{\xi} := \xi - \xi_\mathrm{d},\ \tilde{v} := v - v_\mathrm{d}$.
	The optimal $v$ is given by a linear quadratic regulator
	\begin{align}\label{LQR}
		v = v_\mathrm{d} + K\tilde{\xi}, \ K \coloneqq  -R^{-1}B^\T P
	\end{align}
	where $P$ represents a unique positive definite solution for the matrix Riccati equation
	\begin{align}
		PA + A^\T P - PBR^{-1}B^\T P + Q = 0. \label{eq: riccati equation}
	\end{align}
	The control input $ u (t) $ is obtained by performing an inverse transformation of \eqref{eq: v to u} for the control input $v(t) $ in \eqref{LQR}.

	Since $\xi$ and $v$ have no physical meaning, determining $Q$ and $R$ to achieve desirable performance is generally difficult.
	Therefore, it is desirable to be able to consider another objective function $J_{\rm r} = \int_0^\infty \tilde{x}^\T Q_{\rm r} \tilde{x} + \tilde{u}^\T R_{\rm r} \tilde{u} \ {\rm d}t$, where $\tilde{x} := x - x_{\rm d}, \tilde{u} := u - u_{\rm d}, u_{\rm d} := \Psi^{-1}(v_{\rm d}, x_{\rm d},\bar{d})$.
	This function is difficult to minimize directly, but for example, linearization $\tilde{x} \approx (\del\Phi/\del x)|_{(x_{\rm d},d)}^{-1} \tilde{\xi}, \tilde{u} \approx (\del\Psi/\del u)|_{(u_{\rm d},x_{\rm d},d)}^{-1} \tilde{v}$ yields $J_{\rm r} \approx J$ around the target point $x_{\rm d},u_{\rm d}$ with $Q = (\del\Phi/\del x)^\T Q_{\rm r} (\del\Phi/\del x), R = (\del\Psi/\del u)^{\T} R_{\rm r} (\del\Psi/\del u)$.
	This can assist designing parameters $Q,R$. %Linearizing around different point, e.g., $x,v$ of previous control period, is also possible.

	The overall flow of the calculation of the control input $u(t)$ is shown in Fig.~\ref{fig: Controller implementation}. 
	The controller calculates the control input $u(t)$ from the output signal $x(t)$ obtained from the system to be controlled and the desired target value $x_\mathrm{d},u_\mathrm{d}$ satisfying (\ref{eq: target value on LQR}) and $u_{\rm d} = \Psi^{-1}(v_{\rm d}, x_{\rm d},\bar{d})$.
	In the controller, the output signal $x(t)$ is converted into the state variable $\xi(t)$ through the bijective map $\Phi$, and the internal input signal $v(t)$, which is linearly related to $\xi(t)$, is obtained using the linear control method \eqref{LQR}. 
	In addition, the control input $u(t)$ is obtained by mapping the internal input signal $v(t)$ using the bijective map $\Psi^{-1}$.	
	Fig.~\ref{fig: Controller design} shows the resulting closed-loop system comprising Figs.~\ref{fig: Exactly linearizable model} and \ref{fig: Controller implementation}, which perform feedback control on the linear dynamics $\dot{\xi}=A\xi+Bv+c$.
	The structure is only a linear system in which the bijective maps $\Psi^{-1}$ and $\Phi$ on both sides disappear. 

	% Note that Figs.~\ref{fig: Controller implementation} and \ref{fig: Controller design} omit the exogenous input $\bar{d}$ for the simplicity.
	The above design is based on unconstrained linear-quadratic regulator (LQR) and is one of the simplest methods that takes advantage of the internal linearity of the EL model. 
	For a more advanced control, nonlinear optimal control theory can also be employed (see Appendix~2).

	\begin{figure}[t]
		\begin{minipage}[b]{.55\textwidth}
			\centering
			\includegraphics[width=1\linewidth]{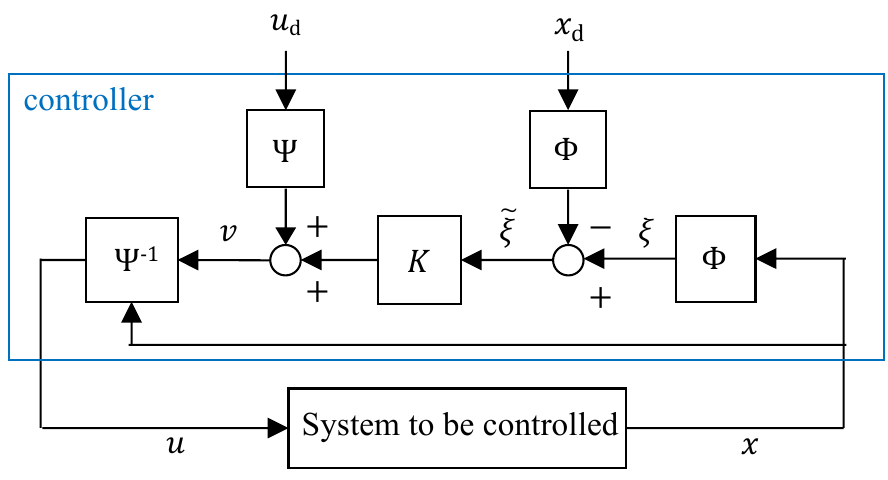}
			\caption{Controller implementation}
			\label{fig: Controller implementation}		
		\end{minipage}
		\hfill
		\begin{minipage}[b]{.43\textwidth}
			\centering
			\includegraphics[width=1\linewidth]{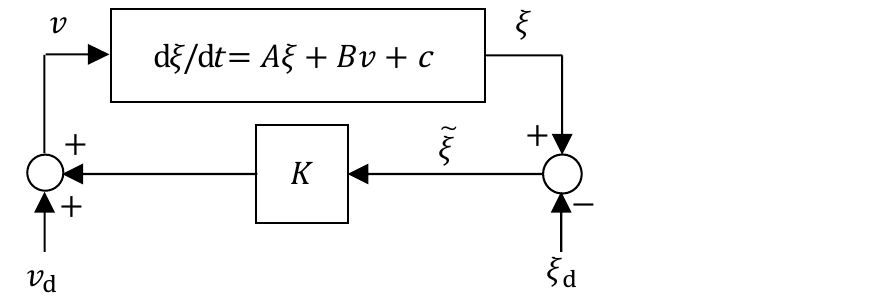}
			\caption{Controller design}
			\label{fig: Controller design}
		\end{minipage}
	\end{figure}

	\subsection{Constraint-Aware Control}
	\label{subsec:QPmethod}
	
	Thus far, constraints \eqref{cons_v} and \eqref{cons_z} have not been considered. 
	We employ the control barrier function (CBF) as one of the methods to satisfy these constraints.
	
	First, we preliminarily introduce the standard CBF approach considering an input-affine single-output nonlinear system:
	\begin{align}
		\dot{\xi} &= f( \xi) + g( \xi) v, \label{eq: CBF target system}\\
		y^\prime &= h( \xi)
	\end{align}
	with $\xi \in X ( \subset \mathbb{R}^n), v \in V ( \subset \mathbb{R}^m) , y^\prime \in \mathbb{R}, h : X \rightarrow \mathbb{R}$. 
	Note that \eqref{eq: CBF target system} corresponds to \eqref{eq:x_affine} when applied to our EL models, and $V$ corresponds to the input constraint \eqref{cons_v}.
	A continuous and differentiable function $h : X \rightarrow \mathbb{R}$ is called a CBF if a class $\mathcal{K}_\infty$ function\footnote{Function $\alpha $ is said to be of class $\mathcal{K}_\infty$ if $\alpha$ is a strictly monotonically increasing function, and $\alpha(0) = 0, \alpha(r)\rightarrow \infty (r \rightarrow \infty)$.} $\alpha : \R_{\geq 0} \rightarrow \R_{\geq 0}$ exists such that $\inf_{v \in V}\left\lbrace \mathcal{L}_f h( \xi) +  \mathcal{L}_g h( \xi) v  \right\rbrace \leq \alpha (-h(\xi)) $ holds for any $\xi\in X$, where $\mathcal{L}_f h( \xi)  := \frac{\partial h}{\partial \xi} f(\xi)$.
	We can verify that if $\xi(0)\in \{\xi \in X : h(\xi) \leq 0 \} =:\mathcal{C}$ and $v(t)$ is chosen such that $\mathcal{L}_f h( \xi) +  \mathcal{L}_g h( \xi) v \leq \alpha ( -h(\xi) )$ holds for any $t$, then $\xi(t)$ does not exit $\mathcal{C}$. 
	To employ this property for the required constraint satisfaction, we can use the solution $v^*$ for the quadratic optimization 
	\begin{align}
		\begin{cases}
		v^{\ast} =& \argmin_{v \in V}  \|v - k(\xi)\|^2 \label{OCP} \\
		\mathrm{s.t.} &  \mathcal{L}_f h(\xi) +  \mathcal{L}_g h(\xi) v 
		\leq \alpha ( -h(\xi) )
		\end{cases}
	\end{align}
	as a control input, where $k( \xi) $ is a desirable control law, such as LQR \eqref{LQR}. 
	Thus, the output constraint $y^\prime = h(\xi) \leq 0$ is satisfied. 
	In the above description, the constrained output $y^\prime$ is assumed to be scalar for simplicity; however, it can be easily extended to a vector by replicating the function $h$.
	
	Next, we consider setting $h(\xi,v)=\Xi(\xi,v)-\bar{y}=:h_y(\xi,v)$ to handle output constraint \eqref{cons_z}. 
	However, the aforementioned framework is not directly applicable to this function because of the input dependency of $\Xi$, and the above discussion is only valid if $h$ depends only on $\xi$.
	To circumvent this issue, we restrict input signals to continuously differentiable ones, such as 
	\begin{align}\label{eq:lambda_to_u}
		\dot v=\lambda,
	\end{align}
	which is not restrictive for practical applications. The augmented system can then be written as 
	\begin{align}
		\diff{}{t}
		\begin{pmatrix}
		\xi \\
		v \\
		\end{pmatrix}
		&=
		\begin{pmatrix}
		f(\xi) +g(\xi) v \\
		\lambda \\
		\end{pmatrix},\\
		y^\prime &= h(\xi,v). \label{u_z}
	\end{align}
	Considering an augmented vector $[\xi^\T,v^\T]^\T$ as a state and $\lambda$ as a new input, this system falls into the standard CBF framework because the input dependence of $h$ can be eliminated.
	The CBF obtained by this augmentation is called the integral CBF (I-CBF) \citep{Ames2021}. From a different perspective, the adoption of I-CBF allows the model to have an input dependence on the function $\Xi$.
	This motivated us to solve a problem
	\begin{align}
		\begin{cases}
		\lambda^{\ast} =& \argmin_{\lambda}  \diff{}{t}\|v - k(x)\|^2 + \beta \|\lambda\|^2  \label{OCP on lambda} \\
		\mathrm{s.t.} & \frac{\partial h}{\partial \xi}( f(\xi) + g(\xi) v)  +
		\frac{\partial h}{\partial v}\lambda 
		\leq \alpha( -h(\xi, v) )
		\end{cases},
	\end{align}
	with a positive constant $\beta$, to determine $\lambda$ as $\lambda = \lambda^*$. The first term aims at the quick convergence of $v$ to $k(\xi)$, and the second term is added to regularize the time-derivative of the input $v$. Although the input constraint $v\in V$ originally considered in \eqref{OCP} disappears in the above problem, it can be treated by combining the vector function
	\begin{align}
		h_v(\xi,v):= \mat{v - \Psi(\bar{u},\Phi^{-1}(\xi) )\\
		\Psi(\underline{u},\Phi^{-1}(\xi)) - v}
	\end{align} 
	to $h(\xi,v)$ for handling the input constraint \eqref{cons_v}. 
	In summary, this approach can handle the constraints \eqref{cons_v} and \eqref{cons_z} by setting $h(\xi,v)=[h_y(\xi,v)^\T,h_v(\xi,v)^\T]^\T$.
	
	The function to be minimized in \eqref{OCP on lambda} is quadratic with respect to $\lambda$, and it is given by
	\begin{align}\label{eq: minimization F}
		F(\lambda) & := \diff{}{t}\|v - k(\xi)\|^2 + \beta \|\lambda\|^2 \\
		& = \beta \|\lambda\|^2 + 2(v-k(\xi))^\T \lambda - 2(v-k(\xi))^\T \pdiff{k(\xi)}{\xi} (f(\xi)+g(\xi) v)\nonumber.
	\end{align}
	In addition, any constraint expressed in the form $h(\xi,v)\leq 0$ is reduced to the affine constraint for $\lambda$ in the I-CBF approach.
	Thus, the above optimization problem is a QP and can be solved quickly using many ready-made solvers.
	Further, the EL model allows functions $h(\xi,v)$ corresponding to \eqref{cons_v} and \eqref{cons_z} to be convex to $v$ for any fixed $\xi$. 
	This guarantees that $\xi,v$ at the equilibrium point of this system is the solution to the problem $\minimize_{v} \|v - k(\xi)\|^2 \  \mathrm{s.t.} \ h(\xi, v) \leq 0 $ (See Appendix~3).
	
	Note that, in practice, the above controller is implemented in a discrete-time manner like $v_{k+1} = v_k + \lambda_k \Delta t$, where $\Delta t$ is a sampling period.

	\section{Experimental Evaluation}\label{section:example}
	\begin{figure}[b]
		\centering
		\includegraphics[clip, scale=0.7]{./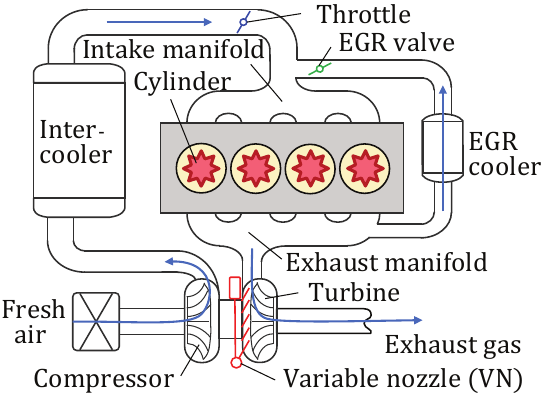}\\
		\caption{Engine air path and combustion system \label{airpath}}
	\end{figure}
	
	The modeling accuracy of the proposed model and the control results are presented using an engine air path and combustion system, as shown in Fig.~\ref{airpath}. The system included an internal combustion controller.
	In this study, we use a high-fidelity simulator, built by GT-Power and Simulink, for data generation.
	The physical quantities of each variable are as follows: The state $x:=\left[ x_1, x_2, x_3\right] $ is the net torque $\left[ \mathrm{Nm}\right] $, the NOx concentration $\left[ \log_{10}( \mathrm{ppm}) \right] $, and the generated soot quantity $\left[ \mathrm{filtered \ smoke \ number}\right] $. 
	The outputs $y:=\left[ y_1, y_2\right] $ are the combustion noise $\left[ \mathrm{dB}\right] $ and maximum cylinder pressure $\left[ \mathrm{MPa}\right] $, respectively. 
	The inputs $u:=\left[ u_1, u_2, u_3\right] $ are the variable nozzle (VN) closing degree $\left[ \mathrm{\%}\right] $, throttle closing degree $\left[ \mathrm{\%}\right] $, and exhaust gas recirculation (EGR) valve opening degree $\left[ \mathrm{\%}\right] $. The exogenous inputs $d:=\left[ d_1, d_2\right] $ are the engine speed $\left[ \mathrm{rpm}\right] $ and fuel injection rate $\left[ \mathrm{mm^3}\right]$, respectively.
	
	\begin{table}
		{\caption{Configuration for model learning\label{tab: setting}} \centering \vspace{10pt}
		\begin{tabular}{c|c} \hline
			Item  & Setting \\ \hline\hline
			$N_\Phi$ (Number of layers for $\Phi$) & 6 \\ 
			Number of neurons for $\Phi$ & 3 (for all layers) \\ \hline
			$N_\Psi$ (Number of layers for $\Psi$) & 6 \\
			Number of neurons for $\Psi$ & 3 (for all layers) \\ \hline
			$N_\Xi$ (Number of layers for $\Xi$) & 2 \\ 
			Number of neurons for $\Xi^{*1}$ & 30 (for all layers) \\ 
			Activation functions for $\Xi$  & $\varphi_{\Xi}^{(1)}(x) = {\rm softplus}(x)$, $\varphi_{\Xi}^{(2)}(x) = x$, $\varphi_{\Xi d}^{(1)}(x) = {\rm softplus}(x)$\\ \hline
			$N_d$ (Number of layers for FNN${ }^{*2}$) & 2 \\ 
			Number of neurons for FNN${ }^{*2}$ & 30 (for all layers) \\ 
			Activation functions for FNN${ }^{*2}$ & $\varphi^{(1)}(x) = \max(0.1\cdot x,x)$, $\varphi^{(2)}(x) = x$ \\ \hline
			Weight matrix $Q_e$ & {\rm diag}([1,1,1,1,1]) \\
			$N_s$ (Number of samples) & 831,488\\ \hline
			% Total degrees of freedom& 4059 & 5192 & 5390 \\ \hline
		\end{tabular}\\
		\footnotesize\raggedright
		*1: The setting is applied for $\varphi_{\Xi}^{(i)},\varphi_{\Xi d}^{(i)} \ (i=1,\ldots,N_\Xi)$. 
		*2: The setting is applied for $\varphi_{\Phi d}^{(i)},\varphi_{\Psi d}^{(i)},\varphi_{\Psi x}^{(i)},\varphi_{s}^{(i)} \ (i=1,\ldots,N_d)$ in Fig.~\ref{fig: Network}. Each layers implicitly includes weights and biases of appropriate size to be learnt.}
	\end{table}
	
	The learning results are compared with those of the S--HW model \citep{Moriyasu2022}, EL model A (output $y$ is independent of the internal input $v$), and EL model B (output $y$ depends on the internal input $v$).
	The model learning settings were determined by trial and error and are listed in Table~\ref{tab: setting}. 
	The response results $\left\lbrace x, y\right\rbrace $ of each model to the input data $\left\lbrace u, d\right\rbrace $ are shown in Fig.~\ref{fig: Results of model accuracy}, and the comparison of the root mean squared error (RMSE) is listed in Table~\ref{tb:R2}.
	Fig.~\ref{fig: Results of model accuracy} shows that the response of the EL model (A, B) is qualitatively better fitted to validation data than that of the S--HW model. 
	Table~\ref{tb:R2} shows that there is a large difference in the coefficient of determination of the state $x$ between the S--HW and EL models (A, B).
	The results confirm that the modeling accuracy can be improved by adding the dependency of the internal input $v$ on the output $y$.
	The target system has complex multi-physics, and therefore, whether it is EL or not has not been verified by physical modeling.
	However, since some outputs are significantly more accurate with EL than with S--HW, it can be inferred that the target system contains characteristics that cannot be inherently represented with S--HW.
	
	\begin{figure}[p!]
		\def\@captype{table}
		\begin{minipage}[b]{.5\textwidth}
			\centering
			\includegraphics[width=0.8\linewidth]{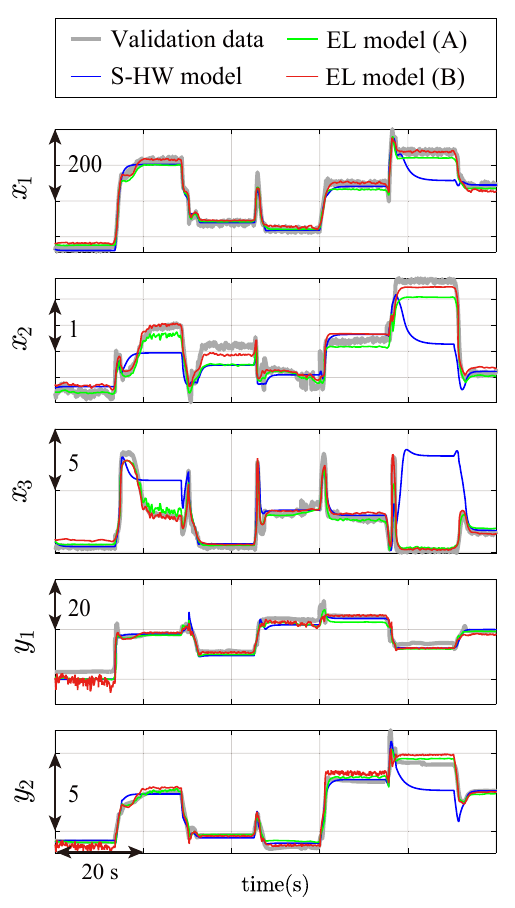}
			\figcaption{Results of model prediction\label{fig: Results of model accuracy}}
			\vspace{10pt}
			\includegraphics[width=0.95\linewidth]{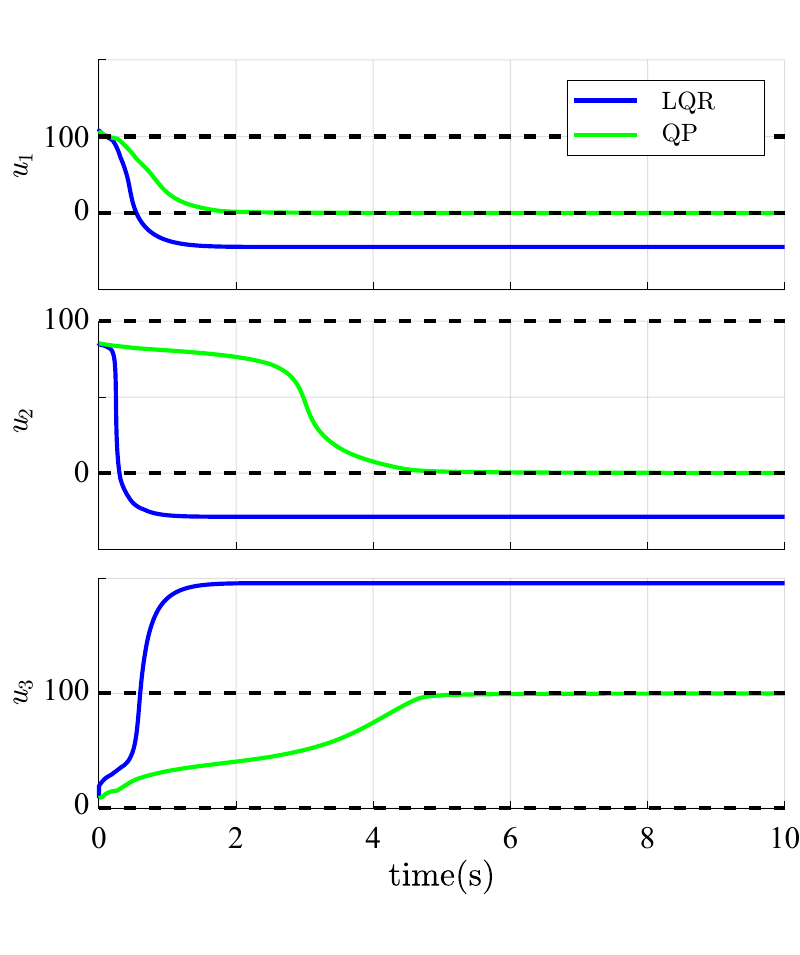}\\
			\caption{Control result of input $u$ \label{fig: Control result of input}}
		\end{minipage}
		\hfill
		\begin{minipage}[b]{.48\textwidth}
			\centering
			\begin{center}
				\tblcaption{Root mean squared error}\label{tb:R2}\vspace{10pt}
				\begin{tabular}{c|c|c|c} \hline
					      & S--HW & EL(A) & EL(B) \\ \hline\hline
					$x_1$ & 17.72 & 16.06 & 15.49 \\ \hline
					$x_2$ & 0.245 & 0.189 & 0.140 \\ \hline
					$x_3$ & 1.218 & 0.546 & 0.573 \\ \hline
					$y_1$ & 2.668 & 2.600 & 2.140 \\ \hline
					$y_2$ & 0.505 & 0.404 & 0.377 \\ \hline
				\end{tabular}
			\end{center}
			\vspace{20pt}
			\vfill
			\includegraphics[width=.95\linewidth]{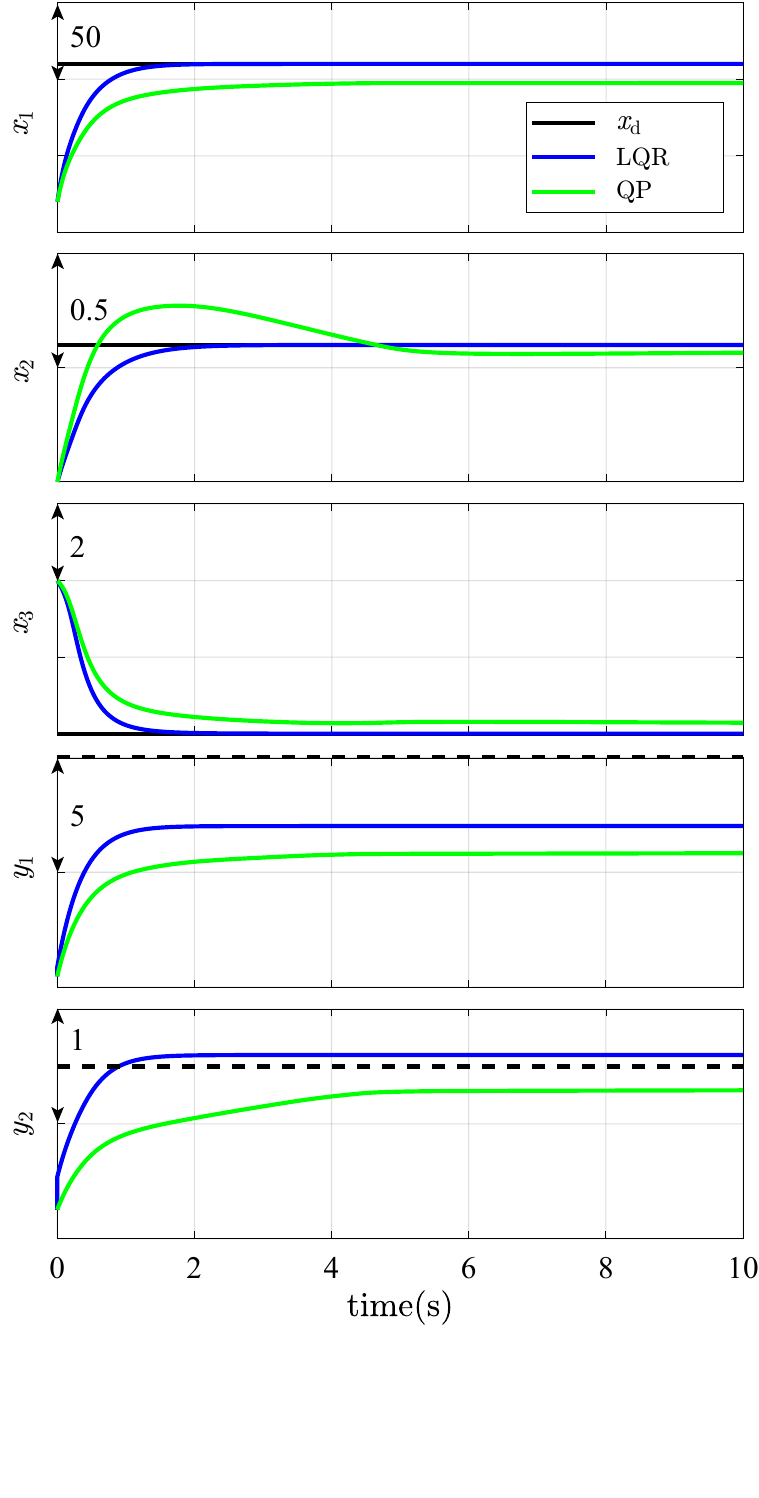}
			\caption{Control result of state $x$ and output $y$}
			\label{fig: Control result of output}
		\end{minipage}
	\end{figure}

	Next, we show the control simulation results.
	The learned EL model (B) is used as the control model and also as the system to be controlled; i.e., assuming no modeling error; since the main focus is on the nominal behavior of the proposed methods.
	We designed two types of controllers using the methods described in \ref{subsec:LQR} and \ref{subsec:QPmethod}.
	That is, the former one is LQR \eqref{LQR} and the later solves QP \eqref{OCP on lambda} to calculate the control input online using LQR \eqref{LQR} as the reference function $k(\xi)$. 
	Weight matrices are set to be $Q={\rm diag}([0.1,0.1,0.1]), R={\rm diag}([10,10,10])$ and regularization parameter is $\beta = 10$.
	These were determined by trial and error.
	The sampling time is 0.001~s, the input and output constraints represented by \eqref{cons_v} and \eqref{cons_z} are $0 \leq u_i \leq 100\, ( i =  1, \ldots, 3 ) $ and $y_1 \leq 80, y_2 \leq 6.5 $. 
	In addition, the class $\mathcal{K}_\infty$ functions $\alpha_i, \alpha_j\, (i = 1, \ldots 3, j = 1, 2)$ in \eqref{OCP on lambda} are quadratic. 
	The numerical results are shown in Figs.~\ref{fig: Control result of input} and \ref{fig: Control result of output}. 
	In these figures, each result and color scheme are shown below.
	\begin{itemize} 
		\item Reference $x_\mathrm{d}$ for the state $x$ (black solid), 
		\item Bounds for the input $u$ and output $y$ (black dashed),
		\item Result of unconstrained control by LQR \eqref{LQR} (blue), 
		\item Result of constrained control by QP \eqref{OCP on lambda}  (green). 
	\end{itemize}
	We can see that QP realized the regulation control of state $x$ to the reference $x_\mathrm{d}$ while satisfying constraints \eqref{cons_v} and \eqref{cons_z}.
	For input $u$, the control law based on \eqref{LQR} (blue line) fails to satisfy the constraint \eqref{cons_v} (black dashed line), whereas the control law based on QP (green line) is within the constraints.
	For state $x$, the results of the QP deviate slightly from the reference $x_\mathrm{d}$ because there is no input $u$ such that $v = \Psi(u, x,\bar{d})$ satisfies constraints \eqref{cons_v} and \eqref{cons_z} under $x = x_\mathrm{d}$. 
	In other words, from \eqref{eq: target value on LQR}, it is possible to obtain $\xi_\mathrm{d}$ and $v_\mathrm{d}$ corresponding to $x_\mathrm{d}$; however, $u, y$ obtained by $\Psi$ in \eqref{eq: v to u} and $\Xi$ in \eqref{eq: x,u to z} do not satisfy the constraints \eqref{cons_v}, \eqref{cons_z}.
	Fig.~\ref{fig: Control result of output} suggests that our proposed method automatically find a suitable steady input $u$ that achieves good regulation performance for the target $x_\mathrm{d}$ while satisfying constraints \eqref{cons_v} and \eqref{cons_z}, which is the purpose of this study.
	The computation time of the proposed contoller based on QP was 0.396~ms for each control period on Intel Core i7-1265U processor, showing sufficiently fast computation. We used QP-KWIK solver \citep{schmid1994quadratic} to solve QP \eqref{OCP on lambda}.

	In practice, we cannot assume that there is no modeling error and the target system have to be replaced into an actual plant or a high-fidelity simulator from which learning data have been obtained.
	In cases involving modeling errors, it is effective to configure a servo system that includes a disturbance observer and/or an integrator, but this is not the main focus of this paper and is omitted here.

	\section{Conclusion}\label{section: conclusion}
	We extended the S--HW model proposed by \citet{Moriyasu2022} in terms of modeling and control design by proposing the output feedback for the input transformation $\Psi$ in modeling and by introducing an integral control barrier function \citep{Ames2021} in the control. The experimental validation confirmed the improvement in modeling accuracy, which led to better control. However, the model structure proposed in this study did not cover the full range of linearizable classes. Our future work will cover a wider range of representational capabilities within the class and extensions to a wider class of dynamical systems that encompass the exactly linearizable class.
	
	% \section*{Disclosure statement}

	% No potential conflict of interest was reported by the author(s).

	\section*{Notes on contributors}
	\textbf{Ryuta Moriyasu} received his B.E. and M.E. degrees from Osaka University in 2011 and 2013, and his Ph.D. degree from Kyoto University in 2024.
	Since 2013, he has been with Toyota Central R\&D Labs., Inc., Aichi, Japan.
	His research interests include machine learning, control theory, optimization theory, and their applications in the automotive field.
	Recently, he has been working on theoretical reliability assurance in control design using machine learning models, with the aim of practical application to engine control system.
	He received Best Presentation Award of the Multi-Symposium on Control System from SICE Control Division, Excellent Presentation Award from RSJ, The Asahara Science Award from JSAE, and so on.
	He is a Member of SICE and JSAE.\\

	\noindent
	\textbf{Masayuki Kusunoki} received his B.E. and M.E. degrees from Shizuoka University in 2013 and 2015.
	Since 2015, he has been with Toyota Industries Corporation, Aichi, Japan.
	He has been engaging in control development of diesel engines for passenger cars.\\

	\noindent
	\textbf{Kenji Kashima} received his Doctoral degree in Informatics from Kyoto University in 2005. He was with Tokyo Institute of Technology, Universität Stuttgart, Osaka University, before he joined Kyoto University in 2013, where he is currently an Associate Professor. His research interests include control and learning theory for complex dynamical systems, and their applications. He received Humboldt Research Fellowship (Germany), IEEE CSS Roberto Tempo Best CDC Paper Award, Pioneer Award of SICE Control Division, and so on. He has served as an Associate Editor of IEEE Transactions of Automatic Control, IEEE CONTROL SYSTEMS LETTERS, and IEEE CSS Conference Editorial Board and a steering committee member of MTNS. He is a Senior Member of IEEE and Member of SICE, ISCIE and IEICE.\\

	% ============================= REFERENCES ===================================================
	
	\bibliography{MyCollection}

\begin{thebibliography}{23}
\providecommand{\natexlab}[1]{#1}
\providecommand{\url}[1]{\texttt{#1}}
\expandafter\ifx\csname urlstyle\endcsname\relax
  \providecommand{\doi}[1]{doi: #1}\else
  \providecommand{\doi}{doi: \begingroup \urlstyle{rm}\Url}\fi

\bibitem[Ames et~al.(2019)Ames, Coogan, Egerstedt, Notomista, Sreenath, and
  Tabuada]{Ames2019}
Aaron~D. Ames, Samuel Coogan, Magnus Egerstedt, Gennaro Notomista, Koushil
  Sreenath, and Paulo Tabuada.
\newblock {Control Barrier Functions: Theory and Applications}.
\newblock \emph{2019 18th European Control Conference (ECC)}, pages 3420--3431,
  jun 2019.
\newblock \doi{10.23919/ECC.2019.8796030}.

\bibitem[Ames et~al.(2021)Ames, Notomista, Wardi, and Egerstedt]{Ames2021}
Aaron~D. Ames, Gennaro Notomista, Yorai Wardi, and Magnus Egerstedt.
\newblock Integral control barrier functions for dynamically defined control
  laws.
\newblock \emph{IEEE Control Systems Letters}, 5\penalty0 (3):\penalty0
  887--892, 2021.
\newblock \doi{10.1109/LCSYS.2020.3006764}.

\bibitem[Amos et~al.(2017)Amos, Xu, and Kolter]{Amos2017}
Brandon Amos, Lei Xu, and J.~Zico Kolter.
\newblock {Input convex neural networks}.
\newblock \emph{34th International Conference on Machine Learning, ICML 2017},
  1:\penalty0 192--206, 2017.

\bibitem[Baird et~al.(2005)Baird, Smalenberger, and Ingkiriwang]{Baird2005}
Leemon Baird, David Smalenberger, and Shawn Ingkiriwang.
\newblock {One-step neural network inversion with PDF learning and emulation}.
\newblock In \emph{Proceedings. 2005 IEEE International Joint Conference on
  Neural Networks, 2005.}, volume~2, pages 966--971. IEEE, 2005.
\newblock \doi{10.1109/IJCNN.2005.1555983}.

\bibitem[Cervantes et~al.(2003)Cervantes, Agamennoni, and
  Figueroa]{CERVANTES2003}
Ania~Lussón Cervantes, Osvaldo~E. Agamennoni, and José~L. Figueroa.
\newblock A nonlinear model predictive control system based on wiener piecewise
  linear models.
\newblock \emph{Journal of Process Control}, 13\penalty0 (7):\penalty0
  655--666, 2003.
\newblock \doi{https://doi.org/10.1016/S0959-1524(02)00121-X}.

\bibitem[Chen et~al.(2019)Chen, Shi, and Zhang]{Chen2019a}
Yize Chen, Yuanyuan Shi, and Baosen Zhang.
\newblock {Optimal control via neural networks: A convex approach}.
\newblock \emph{7th International Conference on Learning Representations, ICLR
  2019}, pages 1--25, 2019.

\bibitem[Fliess et~al.(1995)Fliess, Levine, Martin, and Rouchon]{Fliess1995}
Michel Fliess, Jean Levine, Philippe Martin, and Pierre Rouchon.
\newblock {Flatness and defect of non-linear systems: Introductory theory and
  examples}.
\newblock \emph{International Journal of Control}, 61\penalty0 (6):\penalty0
  1327--1361, 1995.
\newblock ISSN 13665820.
\newblock \doi{10.1080/00207179508921959}.

\bibitem[Fruzzetti et~al.(1997)Fruzzetti, Palazo{\u{g}}lu, and
  McDonald]{Fruzzetti1997}
Keith~P. Fruzzetti, Ahmet~N. Palazo{\u{g}}lu, and Karen~A. McDonald.
\newblock Nolinear model predictive control using hammerstein models.
\newblock \emph{Journal of Process Control}, 7\penalty0 (1):\penalty0 31--41,
  1997.
\newblock ISSN 0959-1524.
\newblock \doi{https://doi.org/10.1016/S0959-1524(97)80001-B}.

\bibitem[Gros(2019)]{Gros2019}
Sebastien Gros.
\newblock Implicit non-convex model predictive control.
\newblock In \emph{Handbook of Model Predictive Control}, pages 305--333.
  Springer, 2019.

\bibitem[Hammerstein(1930)]{hammerstein}
A.~Hammerstein.
\newblock Nichtlineare integralgleichung nebst anwendungen.
\newblock \emph{Acta Mathematica}, 54:\penalty0 117--176, 1930.

\bibitem[Hedjar(2013)]{Hedjar2013}
Ramdane Hedjar.
\newblock Adaptive neural network model predictive control.
\newblock \emph{International Journal of Innovative Computing, Information and
  Control}, 9:\penalty0 1245--1257, 01 2013.

\bibitem[Khalil(2002)]{Khalil2002}
Hassan~K. Khalil.
\newblock \emph{{Nonlinear Systems}}.
\newblock Prentice Hall, 3rd ed edition, 2002.

\bibitem[Kocijan et~al.(2004)Kocijan, Murray-Smith, Rasmussen, and
  Girard]{kocijian01}
Juš Kocijan, Roderick Murray-Smith, Carl~Edward Rasmussen, and Agathe Girard.
\newblock Gaussian process model based predictive control.
\newblock In \emph{2004 American Control Conference}, page ThA08.3, 2004.

\bibitem[{\L}awry{\'{n}}czuk(2008)]{awrynczuk2008}
Maciej {\L}awry{\'{n}}czuk.
\newblock {Suboptimal Nonlinear Predictive Control Based on Neural Wiener
  Models}.
\newblock In \emph{Artificial Intelligence: Methodology, Systems, and
  Applications}, pages 410--414. Springer Berlin Heidelberg, Berlin,
  Heidelberg, 2008.
\newblock \doi{10.1007/978-3-540-85776-1\_40}.

\bibitem[Lenz et~al.(2015)Lenz, Knepper, and Saxena]{Lenz2015}
Ian Lenz, Ross Knepper, and Ashutosh Saxena.
\newblock {DeepMPC: Learning deep latent features for model predictive
  control}.
\newblock \emph{Robotics: Science and Systems}, 11, 2015.
\newblock \doi{10.15607/RSS.2015.XI.012}.

\bibitem[Liu et~al.(2020)Liu, Xiao, Si, Cao, Kumar, and Hsieh]{liu2020does}
Xuanqing Liu, Tesi Xiao, Si~Si, Qin Cao, Sanjiv Kumar, and Cho-Jui Hsieh.
\newblock How does noise help robustness? explanation and exploration under the
  neural {SDE} framework.
\newblock In \emph{Proceedings of the IEEE/CVF conference on computer vision
  and pattern recognition}, pages 282--290, 2020.

\bibitem[Moriyasu et~al.(2019)Moriyasu, Nojiri, Matsunaga, Nakamura, and
  Jimbo]{Moriyasu2019}
Ryuta Moriyasu, Sayaka Nojiri, Akio Matsunaga, Toshihiro Nakamura, and Tomohiko
  Jimbo.
\newblock {Diesel engine air path control based on neural approximation of
  nonlinear MPC}.
\newblock \emph{Control Engineering Practice}, 91\penalty0 (April):\penalty0
  104114, 2019.
\newblock \doi{10.1016/j.conengprac.2019.104114}.

\bibitem[Moriyasu et~al.(2022)Moriyasu, Ikeda, Kawaguchi, and
  Kashima]{Moriyasu2022}
Ryuta Moriyasu, Taro Ikeda, Sho Kawaguchi, and Kenji Kashima.
\newblock {Structured Hammerstein-Wiener Model Learning for Model Predictive
  Control}.
\newblock \emph{IEEE Control Systems Letters}, 6:\penalty0 397--402, 2022.
\newblock ISSN 2475-1456.
\newblock \doi{10.1109/LCSYS.2021.3077201}.

\bibitem[{Nghiem}(2019)]{nghiem01}
Truong~X. {Nghiem}.
\newblock Linearized {Gaussian} processes for fast data-driven model predictive
  control.
\newblock In \emph{2019 American Control Conference}, pages 1629--1634, 2019.

\bibitem[Schmid and Biegler(1994)]{schmid1994quadratic}
Claudia Schmid and Lorenz~T Biegler.
\newblock Quadratic programming methods for reduced hessian {SQP}.
\newblock \emph{Computers \& chemical engineering}, 18\penalty0 (9):\penalty0
  817--832, 1994.

\bibitem[Sontag(1998)]{Sontag1998}
Eduardo Sontag.
\newblock \emph{Mathematical Control Theory: Deterministic Finite-Dimensional
  Systems}.
\newblock 01 1998.
\newblock \doi{10.1007/978-1-4612-0577-7}.

\bibitem[Su(1982)]{Su1982}
Renjeng Su.
\newblock On the linear equivalents of nonlinear systems.
\newblock \emph{Systems \& Control Letters}, 2\penalty0 (1):\penalty0 48--52,
  1982.
\newblock ISSN 0167-6911.
\newblock \doi{https://doi.org/10.1016/S0167-6911(82)80042-X}.

\bibitem[Wiener(1958)]{wiener}
Norbert Wiener.
\newblock \emph{Nonlinear problems in random theory}.
\newblock Wiley, 1958.

\end{thebibliography}
	
	% ============================= APPENDIX ===================================================
	
	\section*{Appendix~1. Exactly Linearizable Condition \citep{Su1982}}
	\label{section: Exact linearizability condition}
	
	In this section, we provide the necessary and sufficient conditions for exact linearizability introduced in Section \ref{subsec:exact_linearization}. The system of interest is represented by 
	\begin{align}
		\frac{dx}{dt} = \overline{f}(x) + \overline{g}(x)u. \label{eq: SI exact linearizable equation}
	\end{align}
	However, we consider a single-input system ($u \in \R$) for simplicity.
	Exact linearization implies that we can transform \eqref{eq: SI exact linearizable equation} into the following linear system by applying a coordinate transformation \eqref{eq: coordinate change} and feedback~\eqref{eq: exact linearization feedback}
	\begin{align} \label{eq: SI linearized system}
		\frac{d\xi}{dt} = \begin{pmatrix}
		0 & 1 & 0 & \cdots &  0\\
		0 & 0 & 1 & \cdots &  0\\
		\vdots & \vdots &  &\ddots & 0 \\
		0 & 0 & 0 & \cdots &  1\\
		0 & 0 & 0 & \cdots &  0
		\end{pmatrix}
		\xi + 
		\begin{pmatrix}
		0\\
		0\\
		\vdots \\
		0\\
		1
		\end{pmatrix}
		v.
	\end{align}
	In addition, we define two notations.
	\begin{dfn}
		From \eqref{eq: SI exact linearizable equation}, for functions $\overline{f}, \overline{g}$, we define the operator $\mathbf{ad}_{\overline{f}} \overline{g}$ as
		We also refer to $\left[\overline{f}, \overline{g}\right]$ as the lie brackets.
		\begin{align}
		\mathbf{ad}_{\overline{f}} \overline{g}&:= 
		\left[\overline{f}, \overline{g}\right], \label{eq: operator}\\
		\left[\overline{f}, \overline{g}\right]& := 
		\frac{\partial \overline{g}}{\partial x} \overline{f}(x) - \frac{\partial \overline{f}}{\partial x} \overline{g}(x).
		\end{align}
		Subsequently, the operator \eqref{eq: operator} acts as 
		\begin{align}
		\mathbf{ad}^{k + 1}_{\overline{f}} \overline{g} &= \left[\overline{f},\mathbf{ad}^{k}_{\overline{f}} \overline{g} \right],\\
		\mathbf{ad}^{0}_{\overline{f}} \overline{g} &=\overline{g}(x),\\
		\mathcal{L}_{\mathbf{ad}_{\overline{f}} \overline{g}} \phi(x) &= \mathcal{L}_{\overline{f}}\mathcal{L}_{\overline{g}}\phi(x) - \mathcal{L}_{\overline{g}}\mathcal{L}_{\overline{f}}\phi(x),
		\end{align}
		where $\phi(x)$ denotes the arbitrary function $x$.
	\end{dfn}
	
	The necessary and sufficient conditions for exact linearizability can be expressed as follows:
	\begin{thm}
		A necessary and sufficient condition for the existence of coordinate transformation \eqref{eq: coordinate change} and feedback \eqref{eq: exact linearization feedback} that transform the equation of state \eqref{eq: SI exact linearizable equation} into a linear system \eqref{eq: SI linearized system} is the simultaneous satisfaction of the following two conditions:
		\begin{enumerate}
		\item $\left\lbrace \mathbf{ad}^{0}_{\overline{f}} \overline{g},\mathbf{ad}^{1}_{\overline{f}} \overline{g}, \cdots , \mathbf{ad}^{n-1}_{\overline{f}} \overline{g} \right\rbrace(x)$ is linearly independent for any $x$.\label{EL condition1}\\
		\item $\left\lbrace \mathbf{ad}^{0}_{\overline{f}} \overline{g},\mathbf{ad}^{1}_{\overline{f}} \overline{g}, \cdots , \mathbf{ad}^{n-2}_{\overline{f}} \overline{g} \right\rbrace(x)$ is involutive.\label{EL condition2}
		\end{enumerate}
	\end{thm}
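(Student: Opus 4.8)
The plan is to prove the equivalence via the classical relative-degree/Frobenius route. First I would reformulate the target: transforming \eqref{eq: SI exact linearizable equation} into the Brunovsky form \eqref{eq: SI linearized system} is equivalent to finding a scalar function $h(y)$ with $dh \neq 0$ whose iterated Lie derivatives along $\overline{f}$ supply the new coordinates, i.e. setting $\Phi = (h,\, \mathcal{L}_{\overline{f}} h,\, \ldots,\, \mathcal{L}_{\overline{f}}^{n-1} h)^\T$ in \eqref{eq: coordinate change}. For this choice to yield a chain of integrators $\dot{x}_i = x_{i+1}$ with the input entering only the last equation, $h$ must have relative degree $n$, namely $\mathcal{L}_{\overline{g}} \mathcal{L}_{\overline{f}}^{k} h = 0$ for $k = 0, \ldots, n-2$ together with $\mathcal{L}_{\overline{g}} \mathcal{L}_{\overline{f}}^{n-1} h \neq 0$. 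The entire problem thus reduces to showing that such an $h$ exists exactly when conditions \ref{EL condition1} and \ref{EL condition2} hold.

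For sufficiency, I would first use the bracket identity $\mathcal{L}_{[\overline{f}, \overline{g}]} h = \mathcal{L}_{\overline{f}} \mathcal{L}_{\overline{g}} h - \mathcal{L}_{\overline{g}} \mathcal{L}_{\overline{f}} h$ stated in the definition above to prove, by induction on $k$, the purely Lie-algebraic equivalence
\[
\mathcal{L}_{\overline{g}} \mathcal{L}_{\overline{f}}^{j} h = 0 \ (j = 0, \ldots, k) \iff \mathcal{L}_{\mathbf{ad}^{i}_{\overline{f}} \overline{g}} h = 0 \ (i = 0, \ldots, k).
\]
Hence the relative-degree conditions amount to requiring that $dh$ annihilate the distribution $\Delta := \mathrm{span}\{\mathbf{ad}^{0}_{\overline{f}} \overline{g}, \ldots, \mathbf{ad}^{n-2}_{\overline{f}} \overline{g}\}$. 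Condition \ref{EL condition1} forces $\Delta$ to have constant rank $n-1$, and condition \ref{EL condition2} says it is involutive, so the Frobenius theorem produces locally an $h$ with $dh \neq 0$ annihilating $\Delta$. I would invoke condition \ref{EL condition1} a second time to argue that $dh$ cannot also annihilate $\mathbf{ad}^{n-1}_{\overline{f}} \overline{g}$ (otherwise $dh$ would vanish on a spanning set), giving $\mathcal{L}_{\overline{g}} \mathcal{L}_{\overline{f}}^{n-1} h \neq 0$, and that the differentials $d(\mathcal{L}_{\overline{f}}^{i} h)$ are independent so $\Phi$ is a local diffeomorphism. Substituting, the last equation reads $\dot{x}_n = \mathcal{L}_{\overline{f}}^{n} h + (\mathcal{L}_{\overline{g}} \mathcal{L}_{\overline{f}}^{n-1} h)\, v$, and the feedback \eqref{eq: exact linearization feedback} with $\Psi_2 = (\mathcal{L}_{\overline{g}} \mathcal{L}_{\overline{f}}^{n-1} h)^{-1}$ and $\Psi_1 = -\mathcal{L}_{\overline{f}}^{n} h \cdot \Psi_2$ cancels the drift and normalizes the gain, producing \eqref{eq: SI linearized system}.

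For necessity, I would use the invariance of the distributions $\Delta_k := \mathrm{span}\{\mathbf{ad}^{0}_{\overline{f}} \overline{g}, \ldots, \mathbf{ad}^{k}_{\overline{f}} \overline{g}\}$ under static feedback and coordinate change (established as a preliminary lemma, with feedback replacing $\overline{f}, \overline{g}$ by $\overline{f} + \overline{g}\Psi_1, \overline{g}\Psi_2$ and the diffeomorphism acting by pushforward). In the target coordinates \eqref{eq: SI linearized system} the transformed fields are $Ax$ and the constant $B$, for which $\mathbf{ad}^{i}$ yields, up to sign, the constant basis vectors $e_{n-i}$; thus $\Delta_{n-1}$ has full rank $n$ and $\Delta_{n-2} = \mathrm{span}\{e_2, \ldots, e_n\}$ is a coordinate distribution, which is trivially involutive. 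Transporting these two properties back through the feedback and diffeomorphism recovers conditions \ref{EL condition1} and \ref{EL condition2} for the original data.

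The main obstacle I anticipate lies in the sufficiency direction, on two points. The first is the inductive equivalence displayed above: it requires repeatedly commuting $\mathcal{L}_{\overline{f}}$ past $\mathcal{L}_{\overline{g}}$ through the bracket identity and discharging the resulting cross terms, and the clean cancellation at each step depends on the lower-order conditions already being in force, so the induction must be set up carefully. The second is the Frobenius step, where involutivity \ref{EL condition2} is precisely what is needed to integrate the rank-$(n-1)$ distribution $\Delta$ into level sets of a single function $h$; verifying the non-degeneracy $\mathcal{L}_{\overline{g}} \mathcal{L}_{\overline{f}}^{n-1} h \neq 0$ and the diffeomorphism property are the remaining places where condition \ref{EL condition1} must be used explicitly rather than assumed.
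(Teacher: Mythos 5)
Your proposal is the standard relative-degree/Frobenius argument (Su 1982; Isidori), which is precisely the route the paper relies on: the paper gives no proof of this theorem itself, only the citation to Su (1982) and the remark that condition~1 is controllability and condition~2 yields the transformation via Frobenius' theorem. Your sketch is sound and matches that approach, so there is nothing to add beyond the minor observation that the feedback-invariance of the nested distributions $\Delta_k$ in your necessity step is most cleanly established by starting from the linear (Brunovsky) side, where involutivity is automatic, and pulling back through the inverse feedback.
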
\noindent
	In this theorem, \ref{EL condition1} is a condition for system controllability, and \ref{EL condition2} provides the necessary and sufficient conditions for the unique existence of the coordinate transformation \eqref{eq: coordinate change} and feedback \eqref{eq: exact linearization feedback} when \ref{EL condition1} is satisfied by the Frobenius' theorem \citep{Su1982}.
		
	\section*{Appendix~2. Nonlinear Optimal Control for Exactly Linearizable Input-Affine Dynamics}
	\label{App:NOC}
	
	We proposed a control design method that fully utilizes simple linear optimal control theory. In this section, we show how to utilize nonlinear optimal control theory based on the modeling framework in the present study, assuming that the system to be controlled is an input-affine nonlinear system. 
	
	Let us consider the following input-affine nonlinear system, as in \eqref{eq: coordinate change} and \eqref{eq: exact linearization feedback}.  
	We can construct $\Psi_1,\ \Psi_2,\ A,\ B,\ c$ as
	\begin{align}
		u(t) &= \Psi_1(x(t))+\Psi_2(x(t))v(t),  \label{app:u}\\
		\dot{\xi}(t) &= A \xi(t) + B v(t) + c, \label{app:dx}\\
		x(t) &= \Phi^{-1}(\xi(t)), \label{app:y}
	\end{align}
	using the data for $\{u,x,\dot x\}$, 
	where $\Phi$ is bijective and $\Psi_2(x)$ is nonsingular for any $x$. Without the loss of generality, we assume $\Phi(0) = 0$ and $c = 0$. 
	The obtained dynamics are input-affine in the form of \eqref{eq: exact linearizable equation}. 
	Next, we define the control Lyapunov function (CLF), which is essential for explaining the Sontag-type stabilizing control law.
	\begin{dfn}[\cite{Ames2019}]
		A positive definite function $V: Y\rightarrow \R$ is called a CLF if there exists a class $\mathcal{K}$ function $\gamma$ such that 
		\begin{align}
		\inf_{v \in D} \left\lbrace \mathcal{L}_{\bar f} V( x) +  \mathcal{L}_{\bar g} V(x) u \right\rbrace <  - \gamma( V(x) ). 
		\end{align}
	\end{dfn}
	For our model, we explicitly provide the CLF
	\begin{align}
		{V}(x) = \Phi( x)^\T P\Phi( x) \label{df_CLF}
	\end{align}
	with the solution $P$ to \eqref{eq: riccati equation}. 
	Once a CLF is found, we can utilize several tools from the nonlinear control theory. 
	\begin{thm}[\cite{Sontag1998}]\label{thm:sontag} 
		Let a positive definite function $V: Y \rightarrow \R$ be a CLF for system \eqref{eq: exact linearizable equation}; then, the following Sontag-type control law $v = \alpha_d( x) $ asymptotically stabilizes the origin of the system \eqref{eq: exact linearizable equation}.
		\begin{align}
		v& = \alpha_d( x)\label{cl_sontag} \\
		\alpha_d( x)  &:= 
		\begin{cases}
		-\frac{1}{r_d( x) }( \mathcal{L}_g V) ^\T, \ \ &( \mathcal{L}_g V \neq 0) \\
		0, &  ( \mathcal{L}_g V = 0)
		\end{cases}
		\end{align}
		where the bounded function $r_d( x) >0$ is given by
		\begin{align}
		r_d( x) := 
		\frac{\mathcal{L}_g V ( \mathcal{L}_g V ) ^\T}
		{\mathcal{L}_f V + \sqrt{\mathcal{L}_f V ^2 + (\mathcal{L}_g V ( \mathcal{L}_g V ) ^\T) ^2 }}.
		\end{align}
	\end{thm}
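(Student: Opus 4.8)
The plan is to run a direct Lyapunov argument: substitute the feedback $u=\alpha_d(y)$ into the feedback-transformed input-affine dynamics and show that the assumed CLF $V$ strictly decreases along every nonzero trajectory, so that Lyapunov's theorem yields asymptotic stability of the origin. To streamline the algebra I would introduce the scalar shorthands $a(y):=\mathcal{L}_f V(y)$ and $b(y):=\mathcal{L}_g V(y)\,(\mathcal{L}_g V(y))^\T=\|(\mathcal{L}_g V(y))^\T\|^2\ge 0$, so that $r_d = b/\big(a+\sqrt{a^2+b^2}\big)$ and the closed-loop derivative reads $\dot V=\mathcal{L}_f V(y)+\mathcal{L}_g V(y)\,u = a + \mathcal{L}_g V(y)\,\alpha_d(y)$. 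Before differentiating I would record the operational content of the CLF hypothesis (expressed for the transformed system in the control $u$): since $\inf_u\{a+\mathcal{L}_g V\,u\}=-\infty$ whenever $\mathcal{L}_g V\neq 0$, the defining inequality $\inf_u\{a+\mathcal{L}_g V\,u\}<-\gamma(V)$ carries information only in the singular direction, namely $a(y)<-\gamma(V(y))<0$ for every $y\neq 0$ with $\mathcal{L}_g V(y)=0$.

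Next I would split into the two branches of $\alpha_d$. On the open set where $\mathcal{L}_g V(y)\neq 0$ (equivalently $b>0$) one first checks that $r_d$ is well defined and positive: because $\sqrt{a^2+b^2}\ge|a|\ge -a$, the denominator $a+\sqrt{a^2+b^2}$ is strictly positive as soon as $b>0$, so $\alpha_d=-r_d^{-1}(\mathcal{L}_g V)^\T$ is a bona fide smooth feedback there. Substituting and using $\mathcal{L}_g V\,(\mathcal{L}_g V)^\T=b$ collapses the cross term to $\mathcal{L}_g V\,\alpha_d=-r_d^{-1}b=-(a+\sqrt{a^2+b^2})$, whence
\begin{align}
\dot V = a - (a + \sqrt{a^2 + b^2}) = -\sqrt{a(y)^2 + b(y)^2} < 0. \nonumber
\end{align}
On the complementary set $\mathcal{L}_g V(y)=0$ the feedback is $\alpha_d=0$, so $\dot V=a(y)$, which is strictly negative for $y\neq 0$ by the CLF content recorded above. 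Hence $\dot V(y)<0$ for all $y\neq 0$, while $V$ is positive definite, and Lyapunov's theorem delivers asymptotic stability of the origin.

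The step I expect to require the most care is not the decrease computation but the regularity of the feedback across the switching boundary $\{\mathcal{L}_g V=0\}$, which is needed for the closed loop to be well posed. Away from the origin the CLF inequality forces $a<0$ on this boundary, and a short expansion $a+\sqrt{a^2+b^2}=\tfrac{b^2}{2|a|}+o(b^2)$ as $b\to 0^+$ gives $\|\alpha_d\|=\big(a+\sqrt{a^2+b^2}\big)/\sqrt{b}\to 0$, so $\alpha_d$ matches the value $0$ continuously; this is precisely the classical continuity property of Sontag's universal formula (continuity at the origin itself would additionally require the small-control property, but is inessential for the asymptotic-stability claim). I would close by remarking that the concrete choice $V(y)=\Phi(y)^\T P\Phi(y)$ in \eqref{df_CLF} indeed qualifies: under $x=\Phi(y)$ the dynamics become the controllable linear system $\dot{x}=Ax+Bu$, for which $x^\T P x$ with $P$ solving \eqref{eq: riccati equation} is the LQR value function and hence a CLF, a property preserved by the bijective change of coordinates.
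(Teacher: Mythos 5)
The paper offers no proof of this theorem at all: it is imported as a known result from \cite{Sontag1998}, so there is no internal argument to compare against and your proposal can only be judged on its own terms. On those terms it is correct, and it is essentially the standard proof of Sontag's universal formula. The core computation is right: with $a=\mathcal{L}_fV$ and $b=\mathcal{L}_gV(\mathcal{L}_gV)^\T\ge 0$, on the set $\{b>0\}$ the denominator $a+\sqrt{a^2+b^2}$ is strictly positive, the cross term collapses to $-(a+\sqrt{a^2+b^2})$, and $\dot V=-\sqrt{a^2+b^2}<0$; on the singular set $\{b=0\}$ the strict CLF inequality gives $a(y)<-\gamma(V(y))<0$ for $y\neq 0$, so $V$ decreases everywhere off the origin and Lyapunov's theorem applies. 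Two points are worth tightening. First, your reading of the CLF condition as carrying information ``only in the singular direction'' tacitly assumes the admissible input set $D$ in the paper's Definition~3 is all of $\R^m$; if $D$ were bounded the infimum would be finite and the dichotomy would need restating --- though since your decrease estimate on $\{b>0\}$ never invokes the CLF inequality, the argument survives either reading. Second, you rightly identify well-posedness of the closed loop as the delicate step and supply the correct expansion $a+\sqrt{a^2+b^2}=b^2/(2|a|)+o(b^2)$ to get continuity across the switching set, while noting that continuity at the origin (the small-control property) is not needed for the asymptotic-stability claim; this is exactly the content of Sontag's original smoothness discussion. Your closing observation that $V(y)=\Phi(y)^\T P\Phi(y)$ with $P$ solving \eqref{eq: riccati equation} is a CLF for the transformed linear dynamics is also precisely the role \eqref{df_CLF} plays in the appendix.
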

	\noindent
	This control law not only stabilizes, but also satisfies the inverse optimality for nonlinear dynamics \citep{Khalil2002}. This property makes the control law robust, as in the phase margin of the linear control theory. For a more extended framework, please refer to differential flatness \citep{Fliess1995}.
	
	\section*{Appendix~3. Equilibrium Condition of Proposed Control System}
	
	The system proposed in Section~3.2 can be summarized as
	\begin{align}
		&\diff{}{t}
		\begin{pmatrix}
			\xi \\
		v \\
		\end{pmatrix}
		=
		\begin{pmatrix}
		f(\xi) + g(\xi) v \\
		\lambda \\
		\end{pmatrix}, \label{AugSys}\\
		&\begin{cases}
		\lambda^{\ast} =& \argmin_{\lambda}  \diff{}{t}\|v - k(\xi)\|^2 + \beta\|\lambda\|^2  \label{OCPrecall} \\
		\mathrm{s.t.} & \frac{\partial h}{\partial \xi}( f( \xi) +g( \xi) v)  +
		\frac{\partial h}{\partial v}\lambda 
		\leq \alpha( -h(\xi, v) )
		\end{cases},
	\end{align}
	where $\beta>0$ and $h(\xi,v)$ are convex with respect to $v$, for any fixed $\xi$.
	The convexity originated from the structure of the EL model.
	We can prove the optimality of the equilibrium as 
	
	\begin{thm}
		If an equilibrium point $(\xi,v)$ exists in the system shown in \eqref{AugSys} and \eqref{OCPrecall}, then the points satisfy $v = \argmin_v \| v - k(\xi) \|^2 \ \mathrm{s.t.} \ h(\xi,v) \leq 0$.
	\end{thm}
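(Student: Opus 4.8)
The plan is to characterize the equilibrium algebraically, read off the Karush--Kuhn--Tucker (KKT) conditions of the inner QP \eqref{OCPrecall} at that point, and then recognize these as precisely the KKT conditions of the target problem $\argmin_u \|u-k(x)\|^2$ subject to $h(x,u)\le 0$, whose convexity in $u$ makes KKT sufficient for global optimality. The whole argument is thus a translation between two sets of KKT conditions, with the class $\cK_\infty$ structure of $\alpha$ as the bridge.

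First I would unpack the equilibrium condition. Setting the right-hand side of \eqref{AugSys} to zero yields $f(x)+g(x)u=0$ together with $\lambda^\ast=0$; that is, the minimizer of the QP at this $(x,u)$ is $\lambda=0$. Substituting $f(x)+g(x)u=0$ into the objective $F(\lambda)$ of \eqref{eq: minimization F} annihilates its last term, leaving $F(\lambda)=a\|\lambda\|^2+2(u-k(x))^\T\lambda$, and collapses each row of the constraint to $\pdiff{h_i}{u}\lambda \le \alpha_i(-h_i(x,u))$. Since $a>0$ the QP is strictly convex with affine (in $\lambda$) constraints, so KKT holds at $\lambda=0$ with no constraint-qualification concern.

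Next I would write out those KKT conditions and transfer them. Stationarity at $\lambda=0$ gives $2(u-k(x))+\sum_i\mu_i\PAR{\pdiff{h_i}{u}}^\T=0$ with multipliers $\mu_i\ge 0$; feasibility of $\lambda=0$ forces $0\le \alpha_i(-h_i)$, hence $h_i(x,u)\le 0$ because $\alpha_i$ is class $\cK_\infty$ and so takes nonnegative values only on nonnegative arguments; and complementary slackness reads $\mu_i\,\alpha_i(-h_i(x,u))=0$. The Lagrangian stationarity of the target problem is $2(u-k(x))+\sum_i\nu_i\PAR{\pdiff{h_i}{u}}^\T=0$, identical in form, so I set $\nu_i=\mu_i$. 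Primal feasibility $h_i\le 0$ and dual feasibility $\mu_i\ge 0$ are already in hand; the remaining requirement $\mu_i h_i=0$ follows from the class $\cK_\infty$ property, since $\alpha_i$ strictly increasing with $\alpha_i(0)=0$ gives $\alpha_i(-h_i)=0\iff h_i=0$, so $\mu_i\alpha_i(-h_i)=0$ implies $\mu_i h_i=0$. Because $h_i(x,\cdot)$ is convex (by the EL-model structure) and $\|u-k(x)\|^2$ is convex in $u$, satisfying KKT certifies a global minimizer, giving $u=\argmin_u\|u-k(x)\|^2$ subject to $h(x,u)\le 0$.

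The main obstacle is exactly the last bridge: converting the CBF-type complementary slackness $\mu_i\alpha_i(-h_i)=0$ into the ordinary $\mu_i h_i=0$. This step is where the hypothesis that each $\alpha_i$ is class $\cK_\infty$ is essential, and it is also what secures feasibility $h_i\le0$ in the first place. The rest is bookkeeping, with convexity of $h$ in $u$ doing the final work of upgrading a stationary point to the true global optimum.
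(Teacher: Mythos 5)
Your proposal is correct and follows essentially the same route as the paper: write the KKT conditions of the strongly convex QP \eqref{OCPrecall}, substitute the equilibrium conditions $f(x)+g(x)u=0$ and $\lambda=0$, and identify the result with the KKT conditions of $\min_u\|u-k(x)\|^2$ s.t.\ $h(x,u)\le 0$, which convexity in $u$ makes sufficient. The one difference is that you spell out the bridge the paper leaves implicit in its ``reduces to'' step --- namely that the class $\mathcal{K}_\infty$ property of $\alpha$ converts $\alpha(-h)\ge 0$ into $h\le 0$ and $\mu\circ\alpha(-h)=0$ into $\mu\circ h=0$ --- which is a welcome clarification rather than a deviation.
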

	\begin{proof}
		As the problem in \eqref{OCPrecall} is strongly convex with $\beta > 0$, the necessary and sufficient condition for first-order optimality can be written as
		\begin{align}
			& \nabla F(\lambda) + \PAR{\nabla G(\lambda)}^\T \mu = 0,\\
			& \mu \geq 0, \ G(\lambda) \leq 0, \ \mu \circ G(\lambda) = 0,
		\end{align}
		where $F(\lambda):= \diff{}{t}\|v - k(\xi)\|^2 + \beta \|\lambda\|^2, \ G(\lambda):= \frac{\partial h}{\partial \xi}( f( \xi) +g( \xi) v)  + \frac{\partial h}{\partial v}\lambda - \alpha( -h(\xi, v) )$, and $\mu$ represents a Lagrange multiplier. 	
		Considering the equilibrium condition of \eqref{AugSys}, i.e., $f(\xi)+g(\xi)v=0, \ \lambda = 0$, the above optimality condition reduces to
		\begin{align}
			& 2(v-k(\xi)) + \PAR{\pdiff{h}{v}}^\T \mu = 0,\\
			& \mu \geq 0, \ h(\xi,v) \leq 0, \ \mu \circ h(\xi,v) = 0.
		\end{align}
		$h(\xi,v)$ is convex to $v$, and therefore, the above condition coincides with the necessary and sufficient condition of the first-order optimality of the optimization problem 
		\begin{align}
			\minimize_{v}  \|v - k(\xi)\|^2 \ \mathrm{s.t.} \ h(\xi,v) \leq 0. \label{OCPsteady}
		\end{align}	
	\end{proof}

	\section*{Word Count}
	This manuscript includes 4424 words.

	\end{document}